\def\be{\begin{equation}}
\def\ee{\end{equation}}
\def\ba{\begin{array}{c}}
\def\ea{\end{array}}
\def\ben{$$}
\def\een{$$}
\newcommand{\bea}{\begin{eqnarray}}
\newcommand{\eea}{\end{eqnarray}}
\newcommand{\kt}{\rangle}
\newtheorem{thm}{Theorem}
\newtheorem{lemma}[thm]{Lemma}
\newtheorem{conj}[thm]{Conjecture}
\newenvironment{proof}{\noindent
 {\bf Proof.}}{\hfill$\square$\vspace{3mm}\endtrivlist}
\begin{document}

\begin{center}

{\Large \bf

 Unitary unfoldings of
 Bose-Hubbard
 exceptional point with and
 without particle number conservation

}

\vspace{0.8cm}

  {\bf Miloslav Znojil}

\vspace{0.2cm}

The Czech Academy of Sciences, Nuclear Physics Institute,

 Hlavn\'{\i} 130,
250 68 \v{R}e\v{z}, Czech Republic

\vspace{0.2cm}

 and

\vspace{0.2cm}

Department of Physics, Faculty of Science, University of Hradec
Kr\'{a}lov\'{e},

Rokitansk\'{e}ho 62, 50003 Hradec Kr\'{a}lov\'{e},
 Czech Republic

\vspace{0.2cm}

{e-mail: znojil@ujf.cas.cz}

\end{center}

\newpage

\section*{Abstract}

Conventional non-Hermitian but
${\cal PT}-$symmetric three-parametric
Bose-Hubbard
Hamiltonian $H(\gamma,v,c)$
represents a
quantum system of $N$ bosons, unitary
only
for parameters $\gamma$,$v$ and $c$ in
a
domain ${\cal D}$.
Its boundary $\partial {\cal D}$ contains
an
exceptional point of order $K$ (EPK, $K=N+1$)
at $c= 0$ and $\gamma = v$, but
even at the smallest non-vanishing
parameter $c \neq 0$ the spectrum
of
$H(v,v,c)$
ceases to be real, i.e.,
the system ceases to be
observable.
In our paper the question is inverted:
All of the stable, unitary and
observable Bose-Hubbard quantum systems are sought which
would lie
close to the
phenomenologically most interesting
EPK-related dynamical regime.
Two different families of such systems are found. Both
of them are characterized by the perturbed Hamiltonians
$\mathfrak{H}(\lambda)=H(v,v,0)+\lambda\,{\cal V}$
for which
the unitarity and stability of the system is guaranteed.
In the first family
the number $N$ of bosons is assumed conserved while
in the second family such an assumption is relaxed.
Our main attention is paid to an
anisotropy
of the
physical Hilbert space
near the EPK extreme.
We show that it is reflected
by a specific,
operationally realizable structure of
perturbations $\lambda\,{\cal V}$
which can be considered small.

\subsection*{Keywords}
.

quantum mechanics,

quasi-Hermitian observables,

dynamics near exceptional points,

Bose-Hubbard Hamiltonians,

stability-guaranteeing perturbations,

\newpage

\section{Introduction\label{introduction}}

The recent growth of popularity
of the study of Schr\"{o}dinger-type evolution equations
 \be
 {\rm i}\frac{d}{dt} \psi(t) = H\,\psi(t)
 \label{setId}
 \ee
containing non-Hermitian generators {\it alias\,} Hamiltonians
$H \neq H^\dagger$ was
reflected
by
the publication of
dedicated books
\cite{Nimrod,book}.
The choice appeared relevant not only in quantum physics
but also far beyond this area \cite{Carlbook,Christodoulides}.
One of the ``hidden''
roots of the appeal of the innovative non-Hermitian
models may look like a
paradox: In these models one can weaken
the not always
desirable robust stability
of the systems controlled by
self-adjoint Hamiltonians.
{\it Pars pro toto}, let us recall the most impressive
illustration
in classical optics where due to the non-Hermiticity of $H$ one is
even
able to stop the light, in principle at least \cite{stoplight}.

In the language of mathematics
the main formal key to similar
innovations of phenomenology may be seen in the
non-Hermiticity-mediated
accessibility
of the eigenvalue degeneracies called exceptional points
(EP, \cite{Kato}). This
feature
opened
many new areas of research in physics (cf.
\cite{Carlbook,Christodoulides} or
the
sample of references listed in our preceding
paper~\cite{passage}).

In 2008, Graefe et al \cite{Uwe} followed the trend.
A turn of their attention to a non-Hermitian
version
of the popular
Bose-Hubbard Hamiltonian (cf. also \cite{[38],[37],zaUwem})
enabled them to enrich, in particular, the
toy-model-mediated understanding
of the Bose-Einstein-condensation phenomenon \cite{Cart}.
In the context of mathematics they complemented the traditional
numerical diagonalization approaches to the model
\cite{cinaj,cinajb} by pointing out the advantages of
the use of dedicated versions of
perturbation theory \cite{Puiseux,mech}.

Today, twelve years later, we intend to
revitalize the initiative of \cite{Uwe}.
We will propose two extensions of the currently studied
non-Hermitian Bose-Hubbard-type Hamiltonians.
In the first one we will keep the number of bosons conserved.
Indeed, such an assumption is still popular, mainly
in the light of the
traditional
experimental as well as theoretical
role played by the Bose-Hubbard model
in the description of
phase transitions in the systems of
ultracold spin-zero atoms
confined by an optical lattice
\cite{Matthew,Atland,ultrac}.
Still, at present, the relevance of such a
conventional
requirement
seems weakened by the recent
shift of study of similar  models
(and, in particular, of their manifestly non-Hermitian versions)
to the dynamical regime controlled by
exceptional points, especially in optics and photonics
\cite{Christodoulides,Alou}.
In the second half of our present paper, therefore, we shall
omit the requirement
of the conservation of the number of bosons
as, after such an innovation of applications ranging up to field theory
\cite{Atland}, over-restrictive.

For all of these purposes,
in a formal parallel to paper \cite{Uwe},
we intend to use
and, occasionally, amend the techniques of perturbation theory.
Still,
the physical scope of our
present study will be narrower.
In place of the open-quantum-system
setup of paper \cite{Uwe}
characterized, basically, by the
Feshbach-inspired
effective Hamiltonians \cite{Nimrod,Feshbach},
we shall restrict
attention to the mere unitary,
closed-quantum-system scenarios
characterized, first of all,
by the full, unreduced information about the dynamics.

Even when
staying inside the
unitary-evolution framework
in which the energies are real
we will keep in mind the
warnings coming from rigorous mathematics
\cite{ATbook}. Thus, we will
accept, as often as possible, the
bounded-operator
constraints as recommended in Refs.~\cite{Geyer,SIGMA}.
We will also mostly employ the terminology
used in these references,
although
we will also occasionally use some
slightly misleading but still
sufficiently well understood popular
abbreviations like
``non-Hermitian operators''.
After all, many of the similar
terminological conventions and ambiguities
were already discussed and sufficiently thoroughly clarified
elsewhere
\cite{ali,MZbook,NIP}.

\section{Conventional non-Hermitian Bose-Hubbard model}

In a way explained in \cite{Uwe,[38],[37],zaUwem},
one of the fairly realistic
descriptions of the so called
Bose-Einstein condensation phenomenon
is provided
by the specific three-parametric bosonic Hamiltonian
\begin{equation} \label{Ham1}
  H (\gamma,v,c)= -{\rm i} \gamma
  \left(a_1^{\dagger}a_1 - a_2^{\dagger}a_2\right) +
  v\left(a_1^{\dagger}a_2 + a_2^{\dagger}a_1\right) +
  c\,H_{int}\,,
  \ \ \ \
  H_{int}=
  \frac{1}{2}
  \left( a_1^{\dagger}a_1 - a_2^{\dagger}a_2\right)^2
\end{equation}
where, for two modes taken into account,
the symbols
$a_1$,
$a_2$ and $a_1^\dagger$,
$a_2^\dagger$ represent
the respective annihilation and
creation operators.
The value of coupling constant
$c$ controls the strength of
the boson-boson interaction inside a double-well potential
(we shall often consider just the interaction-free
limit $c\to 0$  in what
follows).
Parameter
$v$ measures the intensity of the tunneling
through the barrier (for convenience we shall scale it to one)
while the tunable real
quantity
$2\gamma$
stands for an
imaginary part of the on-site
bosonic-energy difference \cite{Uwe,zaUwem}.


\subsection{Matrix representation of
Hamiltonian\label{para2p1}}

The Bose-Hubbard (BH) model of Eq.~(\ref{Ham1})
is conservative in the sense that its Hamiltonian
commutes with the number operator
 \be \label{Num1}
 \widehat{N}=a_1^{\dagger}a_1 + a_2^{\dagger}a_2\,.
 \ee
For the sake of simplicity we will set
$c=0$
(meaning that the mutual interaction
between bosons is neglected) and $v=1$
(reflecting just the choice of units)
almost everywhere in what follows.
We will also make use of the bases
defined in \cite{Uwe}:
In place of operators (\ref{Ham1})
we will work with their suitable matrix
representations.
In particular, for the systems in which
the number of bosons $N$ is conserved,
Hamiltonian (\ref{Ham1})
may be given the
block-diagonal infinite-dimensional matrix form
 \be
 H_{(BH)} (\gamma)=
 H_{(BH)}^{(2+3+\ldots)} (\gamma)=
   \left (
 \begin{array}{cccc}
 H^{(2)}_{(BH)}(\gamma)&\ \ \ \ 0&0&\ldots\\
 0&H^{(3)}_{(BH)}(\gamma)&0& \\
 0&\ \ \ \ 0&H^{(4)}_{(BH)}(\gamma)&\ddots\\
 \vdots&\ \ \ \ \ &\ddots\ \ \ \ \ &\ddots
 \ea
 \right )\, \,
   \label{geneve}
 \ee
containing separate fixed$-N$ sub-Hamiltonians
 \be
 H^{(2)}_{(BH)}(\gamma)=
 \left[ \begin {array}{cc} -i{\it \gamma}&1
 \\{}1&i{\it
 \gamma}
 \end {array} \right]\,, \ \ \ \ \
H^{(3)}_{(BH)}(\gamma)=\left[ \begin {array}{ccc} -2\,i\gamma&
\sqrt{2}&0\\{}\sqrt{2}&0&
\sqrt{2}\\{}0&\sqrt{2}&2\,i\gamma\end {array}
\right]\,\ \ldots\,.
  \label{3wg}
 \ee
In a historical perspective
the oldest versions of the Bose-Hubbard models
were based on the purely imaginary choices of $\gamma$
(making all of the matrices (\ref{3wg}) Hermitian,
i.e., mathematically more user-friendly).
This helped to simulate, first of all,
the superfluid-insulator transitions \cite{Matthew}.
In contrast, the present preference of the
real-valued parameters $\gamma$
may be perceived as
mathematically less elementary but phenomenologically more promising,
especially because in contrast to the Hermitian case,
the related exceptional points may now be
reached (possibly even in an experiment).
In the language of experimental
physics this means, therefore, that such a complementary choice
opens the possibility
of reaching a quantum phase transition of the type simulating the
Bose-Einstein condensation phenomenon \cite{Uwe}.
Remarkably enough, even the instant $\gamma=0$ of
transition between the real and imaginary $\gamma$s
can be given a specific physical interpretation of a
broken-Hermiticiy quantum phase transition
(see \cite{sible} for details).

The authors of Refs.~\cite{Uwe,[38],[37],zaUwem,ngt5}
offered a number of arguments
showing that the choice of real $\gamma$ may be given a
phenomenologically consistent,
experiment-oriented meaning.
They
found, in particular,
that every
element $H^{(K)}_{(BH)}(\gamma)$ of the series
of sub-Hamiltonians (\ref{3wg})
with $K = N+1$  can be assigned the
$K-$plet of
closed-form energy eigenvalues,
 \be
 E_{n}^{(K)}(\gamma)
 =(1-\gamma^2)^{1/2}\,(1-K+2n)\,,
 \ \ \ \ \ n=0,1, \ldots,  K-1\,
 \label{spektrade}
 \ee
(see formula Nr. 28 and picture
Nr. 1 in  Ref.~\cite{Uwe}). In {\it loc. cit.}
we also find that
in
(\ref{3wg}),
only the diagonal matrix
elements would be changed
after introduction of
interaction $H_{int}$
of Eq.~(\ref{Ham1}).

\subsection{Exceptional point}

From the point of view of
traditional phenomenological applications of the
Hermitian versions of Bose-Hubbard model \cite{Matthew,Atland,ultrac},
the number $N$ of bosons in the
system was always fixed and given in advance.
After a shift of attention
to the non-Hermitian alternative of the model,
the conservation of the number of particles
was still considered useful, mainly for formal reasons.
Indeed, at a fixed $N=K-1$
(and in the present limit $c \to 0$ of course)
one easily determines the values of energies (\ref{spektrade})
and concludes that they
remain all real if an only if $\gamma^2 \leq 1$,
and non-degenerate unless
$\gamma^2 = 1$. As long as $N < \infty$,
the necessary mathematics remains elementary,
showing only that in  the two end-of-unitarity limits
$\gamma \to \pm 1$,
the limiting sub-Hamiltonians
$H^{(K)}_{(BH)}(\pm 1)$
cease to be diagonalizable. At all of the
submatrix dimensions
the energy (sub)spectra become $K-$times degenerate,
$\lim _{\gamma \to \pm 1}E_{n}^{(K)}(\gamma)=0$.
The degeneracy
applies also to the related $K-$plet of
eigenvectors (see the detailed proof in \cite{Uwe}).
Thus, the two special values of $\gamma = \pm 1$
acquire the status of the Kato's \cite{Kato} exceptional
point of order $K$ (EPK).

The picture of physics becomes different when one
recalls the full, infinite-dimensional-matrix Hamiltonian (\ref{geneve})
and when one tentatively admits the
existence of perturbations violating
the commutativity of Hamiltonian with
operator $\widehat{N}$ of Eq.~(\ref{Num1}).
Formally speaking, a part of the spectrum of operator
$H (\gamma,1,0)$ might then suddenly become infinitely degenerate
after perturbation.

Naturally, this would be an exciting, entirely new
mathematical phenomenon. Moreover,
in both of the EPK limits $\gamma \to \pm 1$, both of the corresponding
exceptional points might then also become infinitely degenerate.
This could certainly open a number of new questions
ranging from theoretical and experimental physics up to technology and
applications.

The latter observation was a key motivation of our present study.
For the sake of simplicity
let us now start the analysis by
considering just one of the two EPKs, say,
the positive one with $\gamma^{(EPK)} = 1$.
At such a parameter
the diagonalization
of submatrix $H^{(K)}_{(BH)}(1)$
is to be replaced by making it
similar
to Jordan matrix,
 \be
   H^{(K)}_{(BH)}(\gamma^{(EPK)})
 ={Q^{(K)}}\,
 J^{(K)}
 \left (\eta\right )\, \left [{Q^{(K)}}
 \right ]^{-1}\,,
   \label{JBK}
 \ee
 \be
 J^{(K)}(\eta)=\left (
 \begin{array}{ccccc}
 \eta&1&0&\ldots&0\\
 0&\eta&1&\ddots&\vdots\\
 0&0&\eta&\ddots&0\\
 \vdots&\ddots&\ddots&\ddots&1\\
 0&\ldots&0&0&\eta
 \ea
 \right )\,
   \label{JBKz}
 \ee
where, in our case, we have
$\eta=0$. Symbol
${Q^{(K)}}$
denotes the transition  matrix with the known
closed form given in \cite{passage} and forming
the sequence
 \be
 Q^{(2)}=
 \left[ \begin {array}{cc} -i&1\\{}1&0\end {array}
 \right]\,,\ \ \ \
Q^{(3)}=\left[ \begin {array}{ccc} -2&-2\,i&1\\{}-2\,i\sqrt {
2}&\sqrt {2}&0\\{}2&0&0\end {array} \right]\,, \ \ \ \ldots \,.
  \label{topp3}
 \ee
Although the
left-hand-side EP limit of the Hamiltonian
in Eq.~(\ref{JBK})
is
formally
defined by
prescription~(\ref{geneve}), it
ceased to represent an acceptable generator
of unitary evolution in quantum mechanics
because
such a matrix is not
diagonalizable anymore.


\subsection{Unitarity-breaking perturbations at a fixed $N=K-1$}

In a
vicinity
of the manifestly unphysical
exceptional-point-associated operator $H_{(BH)}(1)$
there may exist its mathematically well defined
and phenomenologically useful
perturbed descendants
 \be
 \mathfrak{H}(\lambda) = H_{(BH)}(1) + \lambda\,\mathcal{V}\,.
 \label{perha}
 \ee
We intend to show
that after the specification of an appropriate,
quantum-theoretically consistent
class of perturbations, operators (\ref{perha})
may really
re-acquire
the necessary diagonalizability
(i.e., a formal compatibility with quantum theory)
as well as many new and attractive descriptive
features.

At any fixed
number of bosons $N$
and/or
superscripted matrix dimension $K=N+1$ the
perturbed  Hamiltonian of Eq.~(\ref{perha}) degenerates to the
mere finite-dimensional
$K$ by $K$ matrix
 \be
 \mathfrak{H}^{(K)}(\lambda) = {Q^{(K)}}\,
 J^{(K)}
 \left (\eta\right )\, \left [{Q^{(K)}}
 \right ]^{-1} + \lambda\,\mathcal{V}^{(K)}\,.
 \label{Kperha}
 \ee
In 2008, attention to the related spectral problem was
attracted by Graefe et al
\cite{Uwe}. In their
paper they decided to
study
some of the perturbation-theoretical aspects of
the
realistic
as well as mathematically friendly
Bose-Hubbard Hamiltonian~(\ref{Ham1}).
In one of the dynamical scenarios
of their interest they identified the perturbation
term $\lambda\,\mathcal{V}^{(K)}$ of Eq.~(\ref{Kperha})
with the difference of matrix operators
 $
 H(\gamma,1,c)-H(1,1,0)
 $ at a fixed $N=K-1$. This enabled them to
reveal that
the growth of the strength $c>0$
diminishes, as a rule, the interval ${\cal D}$ of admissible
$\gamma$s inside which the
spectrum remains real and observable.
Subsequently
they
restricted attention to a
$\gamma=v$ subset of the special perturbations
 \be
 \lambda\,\mathcal{V}^{(K)}_{(BH)}=
 H(1,1,c)-H(1,1,0)
 \ee
which enabled them to identify the measure of the size of perturbation
$\lambda$ directly
with the  boson-boson interaction strength $c \neq 0$.
The conclusion was that at {\em any\,} nonvanishing perturbation
strength $c \neq 0$
an abrupt breakdown of the reality of the
perturbed spectrum
is inevitable.
In other words one can say that in the vicinity
of the EPK extreme
the choice of perturbation $ H_{int}$ as made in Eq.~(\ref{Ham1})
has been found incompatible with
the unitarity of the system.
In this specific dynamical regime, indeed, the conventional BH model
is not suitable for our purposes as it
only admits
the open quantum system probabilistic interpretation.

\section{Modified non-Hermitian Bose-Hubbard models}

In the present closed-quantum-system setting,
a modification of the Hamiltonian is needed.
Thus, our perturbed
Hamiltonians (\ref{perha}) will represent,
strictly speaking,
a modified, non-BH family of certain new, amended,
non-Hermitian
(i.e., more precisely, quasi-Hermitian \cite{Geyer})
but still stable and
strictly unitary
BH-type quantum systems possessing the real energy spectra.

\subsection{Perturbations conserving the number of bosons}


The block-diagonal matrix structure of
Hamiltonian (\ref{geneve}) can be interpreted as
an infinite degeneracy of energy levels (\ref{spektrade})
with respect to the number of bosons $N=K-1$.
Although such an approach looks rather formal, it
becomes relevant immediately after
the conservation of the number of bosons
happens to be broken, say, with an intention
of making the model more realistic.
A deeper understanding of such an option
[i.e., of the consequences of the possible
emergence of non-vanishing off-diagonal submatrices
in Eq.~(\ref{geneve})]
has in fact been one of the key
questions which motivated our present study.

In a preparatory step towards such a generalization
of the model it is obvious that also the separate transition
matrices (\ref{topp3})
may be inserted in the definition of a global,
infinite-dimensional
block-diagonal transition matrix,
 \be
 {\cal Q} =
 {\cal Q}^{(2+3+\ldots)} =
    \left (
 \begin{array}{cccc}
 Q^{(2)}_{}&\ \ \ \ 0&0&\ldots\\
 0&Q^{(3)}_{}&0& \\
 0&\ \ \ \ 0&Q^{(4)}_{}&\ddots\\
 \vdots&\ \ \ \ \ &\ddots\ \ \ \ \ &\ddots
 \ea
 \right )\,. \,
   \label{neve}
 \ee
Introducing
the infinite-dimensional direct-sum
generalization
of the single Jordan matrix we only have to
keep all of the limiting energy arguments equal,
 \be
 {\cal J}=
 {\cal J}^{(2+3+\ldots)}(\eta)=
   \left (
 \begin{array}{cccc}
 J^{(2)}_{}(\eta)&\ \ \ \ 0&0&\ldots\\
 0&J^{(3)}_{}(\eta)&0& \\
 0&\ \ \ \ 0&J^{(4)}_{}(\eta)&\ddots\\
 \vdots&\ \ \ \ \ &\ddots\ \ \ \ \ &\ddots
 \ea
 \right )\,.
 \label{jagen}
 \ee
This immediately leads to the full-space generalization
 \be
  H_{(BH)}(1)
 ={\cal Q}\,
{\cal J}(\eta)
 \, {\cal Q}^{-1}\,
   \label{JB}
 \ee
of the $K-$th-subspace similarity relation~(\ref{JBK}).

It should be added that the
confirmation of the
EP-related
non-diagonalizability status of matrix $H_{(BH)}(1)$
was only rendered feasible
by the exact, non-numerical tractability of its analysis.
Indeed, it is well known
\cite{ngt5} that such an EP-singularity status is
fragile and highly
sensitive to small random perturbations.
In the numerically represented models,
in particular, the
random perturbations are always present
due to the round-off errors.
This makes
the exact,
non-numerical tractability of the
conventional Bose-Hubbard Hamiltonian (\ref{Ham1})
in its
EP limit
$\gamma \to 1$
particularly rare and important.

\subsection{Perturbations which do not violate the unitarity\label{3p2}}

Hamiltonian (\ref{Kperha})
is
isospectral  to its partner matrix
 $$
 {H}^{(K)}(\lambda) =\left [{Q^{(K)}}
 \right ]^{-1}\, \mathfrak{H}^{(K)}(\lambda)\,{Q^{(K)}}\,.
 $$
Parameter $\lambda$
enters its perturbation-theory decomposition
 \be
 {H}^{(K)}(\lambda) =
 J^{(K)}(0)
  + \lambda\, W^{(K)}\,,\ \ \ \ \
  W^{(K)}=\left [{Q^{(K)}}
 \right ]^{-1}\,\mathcal{V}^{(K)}\,{Q^{(K)}}\,.
 \label{Aperha}
 \ee
Under two different philosophies,
the related EPK-related (i.e., perturbed-BH) eigenvalue problem
 \be
 {H}^{(K)}(\lambda)
 \,|\Psi_n(\lambda)\kt = E_n(\lambda)\,|\Psi_n(\lambda)\kt\,,
 \ \ \ \ n = 0, 1, \ldots, K-1\,
 \label{se17}
 \ee
has been studied in
our two older papers \cite{admissible,corridors}.
Let us now briefly recall these results.

In the former paper we
followed the older
methodical recommendation of review \cite{Geyer} so that
we
assumed that the matrix of perturbation $ W^{(K)}$
acting in the most conventional Hilbert space
${\cal K}={\cal C}^K$
is $\lambda-$independent and bounded,
 \be
  W^{(K)}\in {\cal B}({\cal K})\,.
  \label{ods}
  \ee
Under this assumption we demonstrated that, in general,
the spectrum ceases to be real
even at the smallest non-vanishing couplings $\lambda \neq 0$.
This just generalized the observations
made, in the conventional
open-system BH context, by the authors of Ref.~\cite{Uwe}.
Still, from our present, different,
closed-quantum-system point of view,
constraint (\ref{ods})
must be declared insufficient,
not guaranteeing that the
perturbations would be theoretically consistent
and experimentally
realizable.

These conclusions inspired our subsequent study
\cite{corridors}. We inverted there the question, searching
for a strengthening of the
admissibility constraint
(\ref{ods}) beyond the BH framework.
The answer has been found and based on introduction of
$\lambda-$dependent
matrices of perturbations,
 \be
 W^{(K)}=W^{(K)}(\lambda)\in {\cal B}({\cal K})\,.
 \label{ladep}
 \ee
After such an enhancement of flexibility of the constraint
and after its appropriate further amendment,
perturbed
Hamiltonians were made observable, keeping
the states inside
a preselected Hilbert
space ${\cal K}$.

As long as the required ultimate amendment
of the theory is a rather technical matter,
interested readers may find its detailed outline
in Appendix A below.
Here, let us only formulate the
final result, recalling a suitable reparametrization of
$\lambda = 1/\Lambda^2$ and of the class of the admissible,
``sufficiently small''
perturbations
 \be
 W^{(K)}(\lambda)=V^{(K)}[\Lambda(\lambda)] +
 {\rm higher\ order\ corrections}\,,\ \ \ \Lambda \gg 1
 \,.
 \ee
The
eligible
and simplified leading-order matrices $V^{(K)}(\Lambda)$
may be found displayed
in Appendix A.
Under a useful though slightly artificial
matrix-triangularity constraint
 \be
 V_{m,n}=0\,,
 \ \ \ \ \
 m \leq n=0,1,\ldots,K-1\,,
 \label{triru}
 \ee
the main part of the necessary condition
of the reality of the perturbed spectrum
has the form
 \be
 V_{m+1,m}={\cal O}(1/\Lambda^{0})\,,
 \ \ \ \ m = 0, 1, \ldots, K-2\,,
 \label{dozmatko}
 \ee
 \ben
 V_{n+2,n}={\cal O}(1/\Lambda^{1})\,,
 \ \ \ \ n = 0, 1, \ldots, K-3\,
 \een
and so on, up to
 \ben
 V_{K-1,0}={\cal O}(1/\Lambda^{K-2})\,.
 \een
More explicitly we may write
 \be
 V_{m+1,m}=a_m^{(1)}/\Lambda^{0}\,,
 \ \ \ \ m = 0, 1, \ldots, K-2\,,
 \label{vezmatko}
 \ee
 \ben
 V_{n+2,n}=a_n^{(2)}/\Lambda^{1}\,,
 \ \ \ \ n = 0, 1, \ldots, K-3\,
 \een
and so on, up to
 \ben
 V_{K-1,0}=a_0^{(K-1)}/\Lambda^{K-2}\,
 \een
where all of the constants are bounded, $a_i^{(j)}={\cal O}(1)$.
These coefficients may be arranged in
a real array which will be called fundamental matrix,
 \be
  C^{(K)}=\left [
   \begin{array}{ccccc}
   0 &1&0&\ldots&0\\
   a_0^{(1)} &0&1&\ddots&\vdots\\
   a_0^{(2)} &a_1^{(1)}&\ddots&\ddots&0\\
   \vdots &\ddots&\ddots&0&1\\
   a_0^{(K-1)} &\ldots& a_{K-3}^{(2)} &a_{K-2}^{(1)}&0
   \end{array}
    \right ]\,.
 \label{uzmatko}
 \ee
Indeed, one immediately imagines that
for the matrix elements lying inside a
``physical'' domain
${\cal D}$ the spectrum of
a given fundamental matrix can be real and non-degenerate:

\begin{lemma}
After the most elementary Kronecker-delta choice of
$a_k^{(j)}=\delta_{j,1}$, $k=0, 1, \ldots, K-1$
the set of eigenvalues $\varepsilon_n$
of matrix (\ref{uzmatko})
becomes
defined in terms of roots of
classical orthogonal Chebyschev polynomials
which are all real and non-degenerate \cite{cheby}.
\label{lekora}
\end{lemma}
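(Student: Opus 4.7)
The plan is to exploit the drastic simplification produced by the Kronecker-delta ansatz. Since $a_k^{(j)}=0$ for every $j\geq 2$, all the lower-triangular entries of (\ref{uzmatko}) vanish except those lying on the immediate subdiagonal, and the surviving subdiagonal entries $a_k^{(1)}$ are uniformly equal to $1$. Combined with the fact that (\ref{uzmatko}) already carries a zero main diagonal and a superdiagonal of pure $1$'s, this means that $C^{(K)}$ collapses to the familiar real symmetric tridiagonal matrix whose only nonzero entries are the $1$'s immediately above and below the diagonal. Reality of the spectrum is thereby immediate from the spectral theorem; only simplicity of the eigenvalues and their explicit Chebyshev identification remain to be established.

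Next, I would let $p_K(\varepsilon)=\det(\varepsilon\,I-C^{(K)})$ denote the characteristic polynomial of this reduced matrix and expand the determinant along the last row. The resulting three-term recurrence
\[
p_K(\varepsilon)=\varepsilon\,p_{K-1}(\varepsilon)-p_{K-2}(\varepsilon),\qquad p_0(\varepsilon)=1,\quad p_1(\varepsilon)=\varepsilon,
\]
is precisely that obeyed by the Chebyshev polynomials of the second kind after the rescaling $\varepsilon=2\cos\theta$, so induction yields the closed form
\[
p_K(2\cos\theta)=U_K(\cos\theta)=\frac{\sin\bigl((K+1)\theta\bigr)}{\sin\theta}.
\]

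From here the spectrum is read off at once: the zeros of $\sin((K+1)\theta)$ lying in the open interval $\theta\in(0,\pi)$ are the $K$ distinct values $\theta_j=j\pi/(K+1)$ with $j=1,2,\ldots,K$, furnishing the $K$ mutually distinct real eigenvalues $\varepsilon_j=2\cos\bigl(j\pi/(K+1)\bigr)\in(-2,2)$. Because $\deg p_K=K$, this accounts for the entire spectrum, which is therefore real, non-degenerate, and, as claimed, expressible in terms of classical Chebyshev roots.

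There is no genuinely hard step in the argument: once the Kronecker-delta specialization has been seen to reduce (\ref{uzmatko}) to the tridiagonal Chebyshev matrix, the rest is classical orthogonal-polynomial theory. The one point I would take a moment to verify is that the superdiagonal of (\ref{uzmatko}) really consists of untwisted $1$'s (not of further $a$-coefficients hidden by the matrix-pattern notation), so that no auxiliary diagonal rescaling or similarity transformation is required before identifying $p_K$ with $U_K$ and invoking the well-known reality and simplicity of its roots \cite{cheby}.
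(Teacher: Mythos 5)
Your proposal is correct. The paper gives no explicit proof of this lemma --- it simply cites \cite{cheby} --- and your argument supplies exactly the standard computation that citation stands for: the Kronecker-delta choice reduces (\ref{uzmatko}) to the symmetric tridiagonal matrix with unit off-diagonals, whose characteristic polynomial obeys the three-term recurrence of the Chebyshev polynomials of the second kind (up to the rescaling $\varepsilon=2\cos\theta$), giving the $K$ distinct real eigenvalues $\varepsilon_j=2\cos\bigl(j\pi/(K+1)\bigr)$; your closing check that the superdiagonal of (\ref{uzmatko}) consists of plain $1$'s inherited from the Jordan block is indeed satisfied.
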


 \noindent
Once the domain
${\cal D}$ is found non-empty, we may also
recall and slightly reformulate
the main result of Ref.~\cite{corridors}:

\begin{thm}
If
the spectrum of fundamental matrix (\ref{uzmatko})
is real and non-degenerate then the quantum
evolution controlled by
Hamiltonian (\ref{Aperha}) is,
in the leading-order approximation,
unitary.\label{unos}
\end{thm}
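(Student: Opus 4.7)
The idea is to reduce the perturbed Hamiltonian $H^{(K)}(\lambda)=J^{(K)}(0)+\lambda W^{(K)}(\lambda)$, in the leading order in $\Lambda=1/\sqrt{\lambda}$, to the fundamental matrix $C^{(K)}$ by an explicit diagonal similarity, and then to apply the standard quasi-Hermitian construction of a positive-definite metric.

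First I would specialize the perturbation to the leading-order shape dictated by (\ref{triru}) and (\ref{vezmatko}): under the triangularity constraint $V^{(K)}$ is strictly lower triangular with $V_{n+k,n}=a_n^{(k)}/\Lambda^{k-1}$ for $k=1,\ldots,K-1-n$. Next I would introduce the diagonal similarity $D=\mathrm{diag}(1,\Lambda^{-1},\Lambda^{-2},\ldots,\Lambda^{-(K-1)})$ and compute $D^{-1}H^{(K)}(\lambda)\,D$ entrywise. A short calculation shows that conjugation by $D$ rescales the non-zero superdiagonal of $J^{(K)}(0)$ to $\Lambda^{-1}$, while for each subdiagonal entry of $\lambda V^{(K)}$ one obtains $\Lambda^{-2}\cdot\Lambda^{k}\cdot a_n^{(k)}\Lambda^{-(k-1)}=a_n^{(k)}/\Lambda$. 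Consequently $D^{-1}H^{(K)}(\lambda)\,D=\Lambda^{-1}C^{(K)}+\mathcal{O}(\Lambda^{-2})$, where the correction absorbs the higher-order terms in $W^{(K)}(\lambda)$ indicated below eq.~(\ref{Aperha}).

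At this stage the leading-order spectrum of $H^{(K)}(\lambda)$ equals $\Lambda^{-1}$ times the spectrum of $C^{(K)}$, so by the hypothesis of the theorem it is real and non-degenerate. Simplicity of the spectrum forces the diagonalizability of $H^{(K)}$, so I would form the biorthogonal spectral decomposition $H^{(K)}=\sum_n \varepsilon_n |R_n\rangle\langle L_n|$ with $\langle L_m|R_n\rangle=\delta_{mn}$ and define the metric $\Theta=\sum_n |L_n\rangle\langle L_n|$. Positivity of $\Theta$ follows from the linear independence of the left eigenvectors, and the intertwining $[H^{(K)}]^{\dagger}\Theta=\Theta H^{(K)}$ is immediate from the decomposition together with the reality of the $\varepsilon_n$'s. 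The operator $H^{(K)}$ is therefore self-adjoint with respect to the inner product induced by $\Theta$, and the evolution it generates is unitary on that physical Hilbert space.

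The only genuinely delicate point is the control of the $\mathcal{O}(\Lambda^{-2})$ remainder: one must show that it does not destroy the reality or the simplicity of the spectrum at the order claimed. Since $C^{(K)}$ is assumed to have simple real eigenvalues, Kato's analytic perturbation theory of matrices guarantees that each eigenvalue and each spectral projector of $D^{-1}H^{(K)}(\lambda)D$ is analytic in $\Lambda^{-1}$ near $\Lambda^{-1}=0$, so the remainder only shifts each eigenvalue by a quantity of order $\Lambda^{-2}$, which is strictly smaller than the $\Lambda^{-1}$-scale gaps inherited from $C^{(K)}$. Both reality and diagonalizability therefore survive in the leading-order sense invoked in the statement, and the unitarity conclusion follows.
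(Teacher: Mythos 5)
Your proof is correct and follows essentially the same route as the paper's own argument: your diagonal similarity $D=\mathrm{diag}(1,\Lambda^{-1},\ldots,\Lambda^{-(K-1)})$ is precisely the inverse of the preconditioning matrix ${G}^{(K)}({\Lambda})$ of Appendix A, and your identification $D^{-1}H^{(K)}(\lambda)D=\Lambda^{-1}C^{(K)}+\mathcal{O}(\Lambda^{-2})$ is the similarity-transformation restatement of the paper's leading-order secular equation (\ref{xzmatko}), yielding the same conclusion $E_n(\lambda)=\sqrt{\lambda}\,\varepsilon_n+\ldots$. The only addition is that you spell out the biorthogonal metric $\Theta=\sum_n|L_n\rangle\langle L_n|$ and the Kato-type control of the remainder, steps the paper leaves implicit in its quasi-Hermitian framework.
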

\begin{proof}
In a way explained in Appendix A, the set
of
eigenvalues $\{\varepsilon_n\}$
of the fundamental matrix of coefficients $C^{(K)}$
determines the
leading-order
bound state energies $E_n(\lambda)$
in Schr\"{o}dinger Eq.~(\ref{se17}),
 \be
 E_n(\lambda)= \sqrt{\lambda}\,\,\varepsilon_n  +
 {\rm higher\ order\ corrections}\,,\ \ \ n = 0, 1, \ldots, K-1\,.
 \ee \end{proof}
The analysis of the
role of the higher-order corrections
remains nontrivial, especially when the
spectrum of fundamental matrix
$C^{(K)}$ remains
real but degenerate. Still, the flexibility
of the
acceptable fundamental
matrices
as required by Theorem \ref{unos}
is large.

After a suitable
special choice of the parameters even some
exactly solvable models can be obtained
(cf. Lemma \ref{lekora}).
Multiple other, different choices of
admissible perturbation-determining matrices $C^{(K)}$
(yielding the other admissible spectra $\{\varepsilon_n\}$)
determine different
unfoldings
of the initial $K-$tuple degeneracy
of
conventional BH bound-state
energy spectrum.
In opposite direction,
every
admissible fundamental matrix $C^{(K)}$
marks
a path which
connects an interior
of the physical domain ${\cal D}$
with its extreme EPK boundary.
Thus, in a vicinity of that point
the family of all of the perturbation-induced unfoldings
forms
a subdomain of stability.

Our initial input choice of
perturbation (\ref{ladep})
is restricted  by
the size-suppression rules
of Eq.~(\ref{dozmatko}). This makes
the standard ${\mathbb{C}}^{(K)}$ norm of
perturbation matrices irrelevant.
In a way explained in  \cite{Ruzicka}
the
admissible matrix elements of $W^{(K)}(\lambda)$
are in fact ordered in a
hierarchy of size which merely reflects and copies the
deformations of the geometry of
the physical Hilbert space of states (in fact, its
anisotropy
increasing with the decrease of $\lambda \to 0$ \cite{lotor}).


\section{Perturbations not conserving the number of bosons\label{3p3}}

Whenever the real experimental quantum dynamics
admits the creation and annihilation
of bosons the
clear separation of the $K-$dimensional BH sub-Hamiltonians
as sampled by Eq.~(\ref{geneve}) becomes
unrealistic and artificial. At the same time the price to
pay for a breakdown of such a separation would be high,
suddenly making any constructive
study of the spectrum almost prohibitively
difficult in general.

One of the possible paths towards
the necessary mathematical simplifications
may parallel the older studies working
with the ``unperturbed''
infinite-dimensional matrix versions of
BH Hamiltonians $H_{(BH)}(\gamma)$
having the block-diagonalized
form.
This assumption could simplify the study of influence of
the
non-conservative
perturbations significantly.
In the language of physics
this assumption would mean that the ``tractable''
generalized Bose-Hubbard-type Hamiltonians
would have the form of perturbations
of the conventional $c = 0$ BH Hamiltonian
 $
 H_{(BH)}(\gamma) 
 $,
 \be
  H_{(GBH)} (\gamma,\lambda)= -{\rm i} \gamma
  \left(a_1^{\dagger}a_1 - a_2^{\dagger}a_2\right) +
  \left(a_1^{\dagger}a_2 + a_2^{\dagger}a_1\right) +
   \lambda\,H_{int}
  \left( a_1,a_1^{\dagger}, a_2,a_2^{\dagger}\right)\,.
  \label{gbhref}
 \ee
Next, using the same strategy as above we shall only
study the models with
$\gamma = 1$. We will see below that such a reduction
of generality will further simplify the technicalities
while keeping the potential physics
behind the model still sufficiently interesting.

The finite-dimensional matrices of perturbations
as studied above
will be now replaced by the full partitioned matrices
 \be
 {\cal V}_{(GBH)} (\lambda)=  \left (
 \begin{array}{cccc}
 {\cal V}^{(2,2)}(\lambda)&{\cal V}^{(2,3)}(\lambda)
 &{\cal V}^{(2,4)}(\lambda)&\ldots\\
 {\cal V}^{(3,2)}(\lambda)&{\cal V}^{(3,3)}(\lambda)
 &{\cal V}^{(3,4)}(\lambda)&\ddots \\
 {\cal V}^{(4,2)}(\lambda)&{\cal V}^{(4,3)}(\lambda)
 &{\cal V}^{(4,4)}(\lambda)&\ddots\\
 \vdots&\ddots&\ddots&\ddots
 \ea
 \right )\,. \,
   \label{refene}
 \ee
Our main task will be to gurarantee that
these matrices really
remain ``sufficiently small'', i.e., that
the related energy
spectrum would not cease to be observable and real.

\subsection{Models with one off-diagonal pair of perturbation
submatrices\label{singh}}

In an elementary nontrivial realization of
the EPK-unfolding
Hamiltonian (\ref{perha})
with an off-diagonal perturbation~(\ref{refene})
let us assume that
up to two exceptions (say,
${\cal V}^{(M,L)}(\lambda)$ and ${\cal V}^{(L,M)}(\lambda)$
with $M < L$)
all of the block-off-diagonal perturbations vanish.
Thus, one has to study the new class
of partitioned Hamiltonians
with a bounded but, otherwise,
fully general matrix of perturbations,
 \be
 \mathfrak{H}^{(M,L)}(\lambda) = H_{(BH)}(1) +
 \lambda\,{W}^{(M+L)}(\lambda) \,.
 \label{reBperha}
 \ee
with a bounded but, otherwise,
fully general matrix of ``small''
boson-boson interactions
 \be
 W^{(M+L)}(\lambda)=V^{(M+L)}[\Lambda(\lambda)] +
 {\rm higher\ order\ corrections}\,,\ \ \ \Lambda
 =1/\sqrt{|\lambda|} \gg 1
 \,.
 \ee
In such a scenario one only has to use the special
symbols for the direct sums of Jordan matrices
 $$
 {\cal J}^{(M+L)}(\eta)=
 \left [
 \begin{array}{cc}
 {J}^{(M)} (\eta)&0\\0&
 {J}^{(L)} (\eta)
 \ea
 \right ]
 $$
[cf. Eq.~(\ref{jagen})] capable to
replace
Eq.~(\ref{se17b})
of Appendix A by its partitioned form
 \be
 \left [
 \begin{array}{cc}
 {G}^{(M)} ({\Lambda})&0\\0&
 {G}^{(L)} ({\Lambda})
 \ea
 \right ]
 \,
 \left [
  {\cal J}^{(M+L)}(-E)
    + \lambda\, W^{(M+L)}(\lambda)
  \right ]\,\left | \ba  |\Psi^{(M)} \kt \\
   |\Psi^{(L)} \kt
   \ea \right \rangle
  = 0\,.
 \label{se17c}
 \ee
It contains a partitioned upgrade of the eigenvector,
 \be
 \left | \ba  |\Phi^{(M)} \kt \\
   |\Phi^{(L)} \kt
   \ea \right \rangle
  =
 \left [
 \begin{array}{cc}
 {G}^{(M)} ({\Lambda})&0\\0&
 {G}^{(L)} ({\Lambda})
 \ea
 \right ]
 \,
\left | \ba  |\Psi^{(M)} \kt \\
   |\Psi^{(L)} \kt
   \ea \right \rangle\,.
 \label{se17s}
 \ee
Working again with the two equivalent
versions of the ``small''
parameters $\lambda=1/\Lambda^2$ we
re-scale the energy,
 \be
 E=E(\lambda) =\varepsilon ({\Lambda})/{\Lambda}\,.
 \label{iskala}
 \ee
Schr\"{o}dinger equation then
acquires the obvious partitioned
structure yielding secular equation
 \be
 \det\,
 \left \{
  {\cal J}^{(M+L)}\left [-\varepsilon(\Lambda)
 \right ]
   +  M^{(M+L)}(\Lambda)
  \right \}
  = 0\,.
 \label{Bmazatko}
 \ee
In the full matrix of rescaled interactions
 \be
 M^{(M+L)}(\Lambda)=\left [
 \begin{array}{cc}
 {M}^{(M)} (\Lambda)&{A}^{(M,L)} (\Lambda)\\
 {B}^{(L,M)} (\Lambda)&
 {M}^{(L)} (\Lambda)
 \ea
 \right ]
 \label{udo}
 \ee
both of the diagonal blocks remain the same as in Appendix A.
As long as $M < L$,
the two off-diagonal non-square rescaled-interaction
submatrices
deserve an explicit display in terms of the respective
abbreviations
$W^{(P,Q)}_{mn}({\lambda})=W_{mn}={\cal O}(1)$ with superscripts
$(P,Q)=(M,L)$ or $(P,Q)=(L,M)$ yielding
 \be
 A^{(M,L)}({\Lambda})=
   \left [
   \begin{array}{ccccc}
   {\Lambda}^{-1}W_{00} &{\Lambda}^{-2}W_{01}&
   {\Lambda}^{-3}W_{02}&\ldots&{\Lambda}^{-L}W_{0,L-1}\\
   {\Lambda}^{0}W_{10} &{\Lambda}^{-1}W_{11}&
   {\Lambda}^{-2}W_{12}&\ldots&{\Lambda}^{1-L}W_{1,L-1}\\
   {\Lambda}^{1}W_{20} &{\Lambda}^{0}W_{21}&
   {\Lambda}^{-1}W_{22}&\ldots&{\Lambda}^{2-L}W_{2,L-1}\\
   \vdots &\vdots&\vdots&\ddots&\vdots\\
   {\Lambda}^{M-2}W_{M-1,0} &{\Lambda}^{M-3}
   W_{M-1,1}&{\Lambda}^{M-4}W_{M-1,2}&\ldots &{\Lambda}^{M-1-L}W_{M-1,L-1}
   \end{array}
    \right ]\,
 \label{Btatko}
 \ee
and, {\it mutatis mutandis},
the analogous formula for $B^{(L,M)}({\Lambda})$.

Next, proceeding along the same lines as in Appendix A we
accept the
boundedness assumption (\ref{ods})
and we replace the respective exact matrix-element
functions $W^{(P,Q)}$ of cut-off $\Lambda$ by their
asymptotically dominant
components ${V}^{(P,Q)}$.
The real and non-degenerate set of the
leading-order energy eigenvalues
$\varepsilon_0=\lim_{\Lambda \to \infty}\varepsilon({\Lambda})$
should be then extracted from the leading-order version of
secular equation
 \be
 \det\,
 \left [{\cal J}^{(M+L)} (-\varepsilon_0)
    +  M^{(M+L)}_0(\Lambda)
  \right ]
  = 0\,.
 \label{zBmazatko}
 \ee
It is defined in terms of the leading-order
version $M^{(M+L)}_0(\Lambda)$
of
the interaction term. Its structure
 \be
 M^{(M+L)}_0(\Lambda)=\left [
 \begin{array}{cc}
 {M}^{(M)}_0 (\Lambda)&{A}^{(M,L)}_0 (\Lambda)\\
 {B}^{(L,M)}_0 (\Lambda)&
 {M}^{(L)}_0 (\Lambda)
 \ea
 \right ]\,,
 \label{udolf}
 \ee
exhibits important simplifications, with the
diagonal blocks defined in Appendix A
(cf. Eq.~(\ref{zmatko})) and
complemented by
 \be
 A^{(M,L)}_0=
 \left [
   \begin{array}{ccccccc}
   0 &0&\ldots&0&0&\ldots&0\\
   V_{10} &0&\ddots&\vdots&\vdots&&\vdots\\
   {\Lambda}^{}V_{20} &\ddots&\ddots&0&0&\ldots&0\\
   \vdots &\ddots&V_{M-2,M-3}&0&0&\ldots&0\\
   {\Lambda}^{M-2}V_{M-1,0} &\ldots&
    {\Lambda}^{}V_{M-1,M-3} &V_{M-1,M-2}&0&\ldots&0
   \end{array}
    \right ]\,
 \label{Bzmatko}
 \ee
and by formula for $B^{(L,M)}({\Lambda})$
(with display left to the readers).
The latter two non-square matrices
as well as their two diagonal-block square-matrix partners
are all lower triangular,
dominated by their respective lowest-left-corner elements.
{\it Per analogiam} we expect
that the scope of Theorem~\ref{unos} might be extended
to cover also
the models with block-off-diagonal perturbations.
Without proof, these expectations
may be given, for Hamiltonians (\ref{reBperha}),
the following explicit formulation:



\begin{conj}
In leading-order approximation,
quantum evolution controlled by the $(M,L)$-par\-ti\-tio\-ned
Hamiltonians (\ref{reBperha}) will be unitary
provided only that
the spectrum of
the corresponding $(M,L)$-partitioned analogue of
fundamental
matrix
(\ref{uzmatko}) proves real and non-degenerate.
\label{tthm}
\end{conj}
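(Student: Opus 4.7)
The plan is to mirror the structure of the proof of Theorem~\ref{unos} while propagating the block-partitioning throughout. I would start from the partitioned secular equation~(\ref{Bmazatko}), apply the rescaling~(\ref{iskala}) $E=\varepsilon(\Lambda)/\Lambda$ so that the powers of $\Lambda$ distribute across the matrix entries exactly as in the single-block case, and pass to the leading-order form~(\ref{zBmazatko}) by replacing each $W^{(P,Q)}_{mn}$ by its asymptotically dominant coefficient $V^{(P,Q)}_{mn}$. This reduces the task to an asymptotic analysis of $\det[\mathcal{J}^{(M+L)}(-\varepsilon_0)+M^{(M+L)}_0(\Lambda)]=0$ in the $\Lambda\to\infty$ regime.

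The central step is to extract from this determinant an effective $(M+L)\times(M+L)$ eigenvalue problem. I would use the fact that all four blocks of $M^{(M+L)}_0(\Lambda)$ in~(\ref{udolf}) are lower triangular and dominated by their lower-left corner entries (compare~(\ref{Bzmatko}) for $A^{(M,L)}_0$, with the analogous formula for $B^{(L,M)}_0$, and with $M^{(M)}_0$, $M^{(L)}_0$ already computed in Appendix~A). Following the same multiplicative gauge by the diagonal matrices $G^{(M)}(\Lambda)$, $G^{(L)}(\Lambda)$ introduced in~(\ref{se17s}), the highest power of $\Lambda$ in each row of the partitioned matrix can be isolated, yielding in the limit a $\Lambda$-independent fundamental matrix $C^{(M,L)}$ whose entries are precisely the coefficients $a^{(j)}_k$ drawn from all four blocks and arranged in the $(M,L)$-partitioned analogue of~(\ref{uzmatko}).

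Given that the spectrum of $C^{(M,L)}$ is assumed real and non-degenerate, the leading-order energies $E_n(\lambda)=\sqrt{\lambda}\,\varepsilon_n+{\rm higher\ order\ corrections}$ are themselves real and simple. Simplicity implies diagonalizability of $\mathfrak{H}^{(M,L)}(\lambda)$ at leading order, which is exactly the property needed to construct a positive metric (\emph{alias} physical Hilbert-space inner product) rendering the Hamiltonian quasi-Hermitian; the unitarity of the evolution generated by~(\ref{reBperha}) in that physical Hilbert space then follows from the standard argument recalled in~\cite{Geyer,SIGMA}.

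The hardest technical point will be the determinantal bookkeeping: in contrast with the single-block case, the submatrices $A^{(M,L)}_0$ and $B^{(L,M)}_0$ are non-square, so the cancellations responsible for producing the correct $\Lambda$-independent limit must be tracked across both Jordan sectors simultaneously. In particular, one must verify that no subleading contribution from the product $A^{(M,L)}_0 B^{(L,M)}_0$ competes with the dominant corner entries of the diagonal blocks, and that the higher-order corrections to $W^{(M+L)}(\lambda)$, which were harmless in Theorem~\ref{unos}, do not induce small imaginary splittings between eigenvalues belonging to different Jordan sectors at sub-leading order. This combinatorial control of the block-asymptotic expansion is presumably the reason why the statement is offered here as a conjecture rather than as a theorem.
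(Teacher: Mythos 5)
First, note that the paper itself offers no proof of this statement: it is presented explicitly as a Conjecture, introduced by the words ``Without proof, these expectations may be given \ldots the following explicit formulation,'' and the Conclusions reiterate that the authors ``do not provide here the proofs.'' Your proposal faithfully reconstructs the heuristic path the paper lays out in the surrounding text --- the partitioned secular equation (\ref{Bmazatko}), the rescaling (\ref{iskala}), the passage to the leading-order blocks (\ref{udolf})--(\ref{Bzmatko}), and the reduction to a partitioned fundamental matrix --- so as a description of the \emph{intended} argument it is accurate and well aligned with the paper.

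However, what you have written is a plan rather than a proof, and the gap you yourself flag in your final paragraph is precisely the gap that prevents the paper from upgrading the Conjecture to a Theorem. Two things remain unestablished. First, the determinantal bookkeeping: in the single-block case of Appendix~A the lower-triangular structure of $M^{(K)}_0(\Lambda)$ against the superdiagonal of $J^{(K)}(0)$ guarantees that every surviving term of the secular determinant carries a net power $\Lambda^0$, so the $\Lambda\to\infty$ limit of the secular polynomial is exactly $\det(C^{(K)}-\varepsilon_0 I)$. In the partitioned case the off-diagonal blocks $A^{(M,L)}_0$ and $B^{(L,M)}_0$ are non-square and their entries carry positive powers of $\Lambda$ up to $\Lambda^{M-2}$ and $\Lambda^{L-2}$; one must actually verify that no permutation in the Leibniz expansion of $\det[{\cal J}^{(M+L)}(-\varepsilon_0)+M^{(M+L)}_0(\Lambda)]$ picks up a net \emph{positive} power of $\Lambda$ (which would make the limit divergent rather than equal to the fundamental-matrix characteristic polynomial). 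You assert this ``must be tracked'' but do not track it. Second, the stability of the reality of the spectrum under the higher-order corrections to $W^{(M+L)}(\lambda)$: non-degeneracy of the leading-order eigenvalues $\varepsilon_n$ gives robustness only if the subleading terms enter at a uniformly controlled order across both Jordan sectors, and this is not argued. Until both points are supplied, your text does not prove the statement; it restates, in more detail, why the statement is plausible --- which is exactly the status the paper assigns to it.
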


 \noindent
At any pair of integers
$M$ and $L$ there exist only too many
fundamental matrices
$C^{(M+L)}$ with a real and non-degenerate spectrum.
One can conclude that
even in the non-conservative, partitioned-matrix
BH-type quantum systems
sampled by Eq.~(\ref{reBperha})
the
constructive guarantees of the reality of
the spectra of energies unfolded
along certain parametric paths
remain mathematically feasible.

Equation ({\ref{geneve}})
of
paragraph \ref{para2p1}
represents,
in the EPK limit $\gamma \to 1$,
our unperturbed block-diagonal Hamiltonian $H_{(BH)}(1)$.
In the present paragraph such a Hamiltonian was
considered endowed with a
most elementary perturbation
assumed to
couple just the
two basis-state sets
representing the two arbitrary non-equal
amounts of bosons, $N_1\ (=M) < N_2\ (=L)$.
The idea can easily be generalized
to the triply partitioned models with coupling of
three preselected non-equal
amounts of bosons $N_1 < N_2 < N_3$, etc.
One immediately imagines that
such steps would be too formal.
From the point of view of the dynamics of the bosons it
makes much better sense to study just
systems with more partitions but
with not too large total number of
mutually interacting bosons.


\subsection{Leading-order block-non-diagonal Hamiltonians}

Although the most straightforward modification of
the Bose-Hubbard Hamiltonian (\ref{Ham1})
has been shown to require a change of its boson-boson
interaction component
$H_{int}(a_1,a_2,a_1^\dagger,a_2^\dagger)$,
such a most obvious model-building
strategy
may have several weak points.
In the basis where the unperturbed Hamiltonian
remains block-diagonal, for example,
technical obstacles might emerge in connection
with the evaluation
of matrix elements of the perturbation. Besides that,
one must often keep in mind that
the underlying algebras of the creation and annihilation
operators can only be realized
in a strictly infinite-dimensional
Hilbert space.
For all of these reasons
an alternative strategy will be advocated and used
in what follows. In a formally less ambitious
approach
we will make an ample
use of the finite-dimensional, truncated-matrix
versions of our Hamiltonians.
In parallel, we will insist on
staying inside a
closed quantum system setup and phenomenology.
Thus, even our most complicated versions of
perturbed BH
Hamiltonians
$H_{(GBH)}(1,\lambda)$
will be required
to possess the unitarity-compatible real spectra.

We shall keep in mind that
even the most ambitious generalizations
of the interactions should
remain user-friendly.
We felt inspired by the results
outlined in Appendix A. In their light
one of the key features
of our models
will lie
in their sparse-matrix form in the
dominant-order approximation.
Although the matrices will stay
manifestly $\lambda-$dependent,
their generic structure will be
the following one,
 \be
 \mathfrak{H}^{(2+3+\ldots)}_0(\lambda)
 = \left[ \begin {array}{cc|ccc|cccc|c}
  0&1&0&0&0&0&0&0&0&\ldots
  \\
  \star&0&\star&0&0&\star&0&0&0&\ldots
  \\
  \hline
  0&0&0&1&0&0&0&0&0&\ldots
  \\
  \star&0&\star&0&1&\star&0&0&0&\ldots
  \\
  \star&\star&\star&\star&0&\star&\star&0&0&\ldots
  \\
  \hline
  0&0&0&0&0&0&1&0&0&\ldots
  \\
  0&0&0&0&0&\star&0&1&0&\ldots
  \\
  \vdots&\vdots&\vdots&\vdots&
     \vdots&\vdots&\ddots&\ddots&\ddots&\ddots
    \end {array} \right]\,.
 \label{thethe}
 \ee
Such a structure will result from a combination of the
requirement of the reality of spectra
with the
physics-dictated step-by-step choice of
the dimensions $N_1=2$, $N_2=3$, \ldots\,,
$N_{\max} =L$ responsible for the
boson-number partitioning.
After truncation, the original infinite-dimensional
matrix
$\mathfrak{H}^{(2+3+\ldots)}_0(\lambda)$
of Eq.~(\ref{thethe}) would acquire the
finite-dimensional, $K_L$ by $K_L$
matrix form
$\mathfrak{H}^{(2+3+\ldots+L)}_0(\lambda)$
where
$K_L=(L^2+L-2)/2$.

With the formal proof postponed to a forthcoming
research, our expectations
concerning the reality of spectra
have,
at present, the following form:

\begin{conj}
In leading-order approximation,
quantum evolution controlled by the truncated forms of
Hamiltonian (\ref{thethe}) will be unitary
provided only that
the spectrum of
the
fundamental
matrix
[i.e., of the corresponding $(N_1,N_2, \ldots\,,
N_{\max})$-partitioned analogue of
matrix
(\ref{uzmatko})] proves real and non-degenerate.
\label{ttthm}
\end{conj}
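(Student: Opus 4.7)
The plan is to adapt the two-step reduction that produced Theorem \ref{unos} and Conjecture \ref{tthm}, extending the similarity-and-rescaling argument to an arbitrary boson-number partitioning $(N_1,N_2,\ldots,N_{\max})$. First I would apply the partitioned transition matrix
\[
\mathcal{Q} = \mathrm{diag}\bigl(Q^{(N_1+1)}, Q^{(N_2+1)},\ldots,Q^{(N_{\max}+1)}\bigr)
\]
to bring the unperturbed part of the truncated Hamiltonian $\mathfrak{H}^{(2+3+\ldots+L)}_0(\lambda)$ into a direct sum of nilpotent Jordan blocks, while the perturbation acquires a block-partitioned form with diagonal blocks $W^{(N_j+1)}(\lambda)$ (treated in Appendix A) and off-diagonal blocks $W^{(N_i+1,N_j+1)}(\lambda)$ (treated for a single pair in Section \ref{singh}). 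The sparse star-pattern displayed in (\ref{thethe}) is precisely the image, under $\mathcal{Q}$, of the admissible positions dictated by the boundedness hierarchy (\ref{dozmatko}) applied blockwise.

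The second step is the global gauge rescaling. For each Jordan block of size $K_j = N_j+1$ I would introduce the diagonal matrix $G^{(K_j)}(\Lambda) = \mathrm{diag}(1,\Lambda,\Lambda^2,\ldots,\Lambda^{K_j-1})$ and combine them into $G = G^{(K_1)} \oplus G^{(K_2)} \oplus \cdots \oplus G^{(K_{\max})}$. Applied to the transformed Schr\"odinger equation with the rescaled energy $E = \varepsilon/\Lambda$, this gauge balances each Jordan superdiagonal against the suppressed matrix elements in (\ref{dozmatko}). Within each diagonal block, the entry in position $(m,n)$ picks up a factor $\Lambda^{m-n}$; the same counting extended across blocks via the off-diagonal scheme (\ref{Btatko}) forces every admissible entry to have non-positive effective power of $\Lambda$, and the entries of exact weight $\Lambda^{0}$ are precisely the stars of (\ref{thethe}). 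Collecting them reproduces the $(N_1,\ldots,N_{\max})$-partitioned generalization of the fundamental matrix $C^{(K)}$ in (\ref{uzmatko}).

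Third, passing to the limit $\Lambda\to\infty$, the secular equation reduces, as in Appendix A, to
\[
\det\bigl[-\varepsilon\,I + C^{(K)}_{\mathrm{part}}\bigr] = 0,
\]
with $K = \sum_{j}(N_j+1)$. Reality and simplicity of the spectrum of this fundamental matrix then translate into reality and simplicity of the leading energies $E_n(\lambda) = \sqrt{\lambda}\,\varepsilon_n + O(\lambda)$, which in the quasi-Hermitian framework of \cite{Geyer} is exactly the condition ensuring the existence of a positive metric and hence unitarity of the leading-order evolution.

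The hard part will be the combinatorial bookkeeping across the multiple off-diagonal blocks. For a single pair $(M,L)$ it was handled in (\ref{Btatko}); for the general partition one must check that the $\Lambda$-counting remains simultaneously consistent along every pair $(N_i,N_j)$, and that no cross-block matrix element of strictly positive effective weight is generated by the change of basis. A subsidiary and more delicate analytic difficulty is controlling the higher-order corrections: one must show that simplicity of the spectrum of the fundamental matrix is robust enough to prevent new level collisions from emerging as $\lambda$ moves away from zero. This is the same issue that prevented the analogous statement in Section \ref{singh} from being upgraded beyond a conjecture, and it is the principal reason the present claim is likewise left as Conjecture \ref{ttthm}.
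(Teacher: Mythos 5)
There is nothing in the paper to check your argument against: the statement you are proving is deliberately left as Conjecture~\ref{ttthm}, and the author states explicitly that ``we do not provide here the proofs'' because the leading-order approximation is only vaguely specified, the higher-order corrections cannot yet be controlled systematically, and the required spectral properties of the fundamental matrices cannot be guaranteed in general. Your proposal is a faithful roadmap of how such a proof \emph{would} go --- block-diagonal similarity via the direct sum of the $Q^{(N_j+1)}$, the global gauge $G=\bigoplus_j G^{(K_j)}(\Lambda)$ with $E=\varepsilon/\Lambda$, reduction of the secular equation to $\det[C^{(K)}_{\mathrm{part}}-\varepsilon I]=0$ as $\Lambda\to\infty$ --- and it correctly mirrors the machinery of Appendix~A and of the single-pair case in Section~\ref{singh}. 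But it is not a proof, and you say so yourself: the two steps you defer (the multi-block $\Lambda$-weight bookkeeping and the control of higher-order corrections) are precisely the obstructions the paper cites for leaving the claim as a conjecture.

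Two of the deferred steps deserve to be named more sharply than you do. First, the limit $\Lambda\to\infty$ of the secular determinant is not an entrywise limit: the leading-order rescaled perturbation (cf.~(\ref{zmatko}) and (\ref{Bzmatko})) contains entries that \emph{diverge} like $\Lambda^{m-n-1}$, and only after imposing the size hierarchy (\ref{dozmatko}) blockwise \emph{and} on every cross-block submatrix does the determinant as a whole converge to the characteristic polynomial of the partitioned fundamental matrix. For a general partition $(N_1,\ldots,N_{\max})$ the row and column index offsets differ between every pair of blocks, so one must verify that a single hierarchy is simultaneously consistent for all pairs $(N_i,N_j)$; this is a genuine combinatorial claim, not a formality, and it is exactly where the star pattern of (\ref{thethe}) would either be confirmed or corrected. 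Second, your final inference from ``real and simple leading-order spectrum'' to ``unitary evolution'' requires exhibiting a positive metric for the perturbed (not merely leading-order) Hamiltonian on a neighbourhood of $\lambda=0$; simplicity of the $\varepsilon_n$ makes this plausible by continuity, but that continuity argument is precisely the higher-order control that is missing. So the proposal should be read as a correct identification of the proof strategy and of its open gaps, not as a proof of the statement.
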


 \noindent
In Conjectures \ref{tthm} and \ref{ttthm} we spoke about the
leading-order approximations which were just rather vaguely specified.
Another formal weakness of these Conjectures is that at present, we
do not know how one could include, sufficiently efficiently,
the higher-order corrections,
or how one could guarantee, in a systematic manner, the required
spectral properties of the respective fundamental matrices.
This is the reason why we do not provide here the proofs
(converting
our Conjectures, in a more or less straightforward manner, into Theorems),
and why we will prefer adding a few illustrative examples in the next section.
We will shift emphasis to physics (of the most elementary systems of
``not too many'' bosons),
and we will outline there a few
more tricks helping to keep the spectra real and non-degenerate
in practical applications.

\section{Examples}

The main technical challenge connected with the study of
models of preceding section
may be found formulated in Conjecture \ref{ttthm}: One needs to
guarantee that
the spectrum of a
fundamental
matrix is real and non-degenerate.
Unfortunately, even at the smallest possible
partition
dimensions
$N_1\ (=M) =2$ and $N_2\ (=L) =3$
the
corresponding
secular polynomial
will be a polynomial of the fifth degree in the energy.
For this reason, the
necessary specification of its admissible, dynamics-determining coefficients
[i.e., of the unitarity-compatible matrix elements of
perturbations $\lambda W^{(M+L)}(\lambda)$
in Eq.~(\ref{reBperha})]
seems to be a purely numerical task.
Now, let us show that such a task can be
reduced to a sequence of easier steps.

\subsection{Models with the bosonic pairs coupled to triples}

\subsubsection{The most elementary case}

In the above-mentioned
most elementary generalized-Bose-Hubbard example with
$N_1\ (=M) =2$ and $N_2\ (=L) =3$
one has to consider
Hamiltonian
 \be
 \mathfrak{H}^{(2+3)}(\lambda)=
 \left[ \begin {array}{cc|ccc}
  0&1&0&0&0
  \\ \star&0&\star&0&0
   \\
   \hline
 0&0&0&1&0
 \\ \star&0&\star&0&1
 \\ \star&\star&\star&\star&0\end {array} \right]\,
 \label{forha}
 \ee
with the potentially non-vanishing
matrix elements marked by stars $\star$.
Up to scalar factor $1/\sqrt{\lambda}$ this Hamiltonian
should be
isospectral with the fundamental
$(2,3)$-partitioned matrix
 \be
 C^{(2+3)}= \left[ \begin {array}{cc|ccc}
  0&1&0&0&0
  \\a&0&b&0&0
   \\
   \hline
 0&0&0&1&0
 \\c&0&d&0&1
 \\e&f&g&h&0\end {array} \right]\,.
 \label{predeal}
 \ee
This is an ansatz which
varies with eight free parameters
representing the dominant components of the
interaction. Now,
in spite of the quintic-polynomial
(i.e., exactly unsolvable)
nature of the
corresponding secular equation,
what is to be sought are the criteria
of admissibility of these parameters in the unitary
evolution regime.


Intuitively, the existence of the octuplet of free parameters
might prove sufficient for an adjustment to
any input quintuplet of
eigenvalues $\varepsilon_n$. Quickly, one finds that such a recipe is not viable.
Safer conclusions can only
result from a construction
based on
the explicit form
 $$
 \det ( C^{(2+3)}- \varepsilon \,I)=0\,
 $$
of the underlying quintic-polynomial secular equation.

A way to an efficient simplification of the problem
lies in the fact that
in the secular
polynomial the
coefficient at the
second power of
$\varepsilon$
is equal to $g$. Thus,
the term drops out after one selects
$g=0$. This immediately implies that when we further set $e=0$,
the five roots of the secular polynomial
acquire the following elementary
form
 \be
 0, \pm \frac{\sqrt{a+d+h\pm \sqrt{4b(c+f)+(d+h-a)^2)}}}{\sqrt{2}}\,.
 \label{formuf}
 \ee
Due to the elementary nature of the model with $M=2$ and $L=3$
(where the selection of $g=e=0$ was ``obvious'') the problem is solved.
Unfortunately, the similar easy simplifying selection ceases to be
available whenever $M>2$ and/or $L>3$.

Let us now describe an alternative, more robust recipe
by which the existence of the necessary fundamental matrices with real spectra
could be proved, in a systematic iterative manner, even at the larger $M>2$ and/or $L>3$.
In the first step let us return to the benchmark model (\ref{formuf})
and let us make a ``wrong'' choice having set all of the
remaining parameters equal to one.
Naturally, we get an unsatisfactory answer because the
spectrum appears real but partially degenerate.
Still, although the trial and error guesswork did not work,
a subsequent return to the $g \neq 0$ tentative
amendment using $f=1/100$ and $g=1/2$
already appears to serve the purpose.
What is obtained is the
numerical (i.e., approximate)
quintuplet of well separated real
eigenvalues
 $$
  \{ -1.5118, -0.4630, 0.0000, 0.4630, 1.5118\}\,.
  $$
  %
  %
This offers an alternative proof of
the existence of at least one
fundamental matrix with the required properties.

Incidentally, as long as the model with $K_L=5$
is not yet too large,
the alternative proof
based on the ``wrong'' choice of $g \neq 0$
can still be made non-numerical. Indeed,
besides the obvious exact root $\varepsilon_0=0$,
also the other four
non-vanishing ones can be given
the closed form
 $$\varepsilon=\varepsilon_{\pm,\pm}=
(\pm \sqrt {390} \pm \sqrt {110})/20\,$$
which parallels formula (\ref{formuf}).

\subsubsection{Boundary $\partial{\cal D}$ of the
corridor of unitarity\label{secA}}

In preceding subsection we demonstrated that the
corridor
${\cal D}={\cal D}^{(2+3)}$ of the
admissible parameters
of unitary unfoldings of
the twice
degenerate Bose-Hubbard EP = EP2 + EP3 extreme is a
nonempty
domain.
This opens a number of new questions
concerning the shape and properties of
the boundary $\partial{\cal D}^{(2+3)}$ of stability.

In a preparatory step we notice that in the original
``eight-star'' Hamiltonian
(\ref{forha})
only six stars stand for the dominant
components of the
interaction
[i.e., for the ${\cal O}(\Lambda^0)$
matrix elements of $V$,
see the first line of Eq.~(\ref{dozmatko})].
For this reason we will omit the study of the role of the
two next-order  ${\cal O}(\Lambda^{-1})$ contributions
[cf. the second line in Eq.~(\ref{dozmatko})] and
we will set again, for methodical reasons,
$e=g=0$ in
our five by five fundamental matrix  (\ref{predeal}).
Then, the related
secular polynomial
 \be
 P={{\it {z}}}^{5}- \left( a+d+h \right) {\it {z}}^3
 + \left( a(d+h)-b(f+c) \right)  {\it {z}}\,
 \label{petak}
 \ee
defines,
via the condition of
reality and non-degeneracy of all
of its five roots (\ref{formuf}),
the whole six-dimensional physical domain
${\cal D}^{(2+3)}$ of parameters
$a$, $b$, $c$, $d$, $f$ and $h$ of
dynamical relevance.

The ``most  common''
parts of  boundary $\partial{\cal D}^{(2+3)}$
could be now identified with separate
submanifolds of pairwise EP2 mergers
of the roots. The boundary may also contain
the lower-dimensional
parts supporting the
higher-order mergers, up to
the most interesting EP5 extreme if any.

As long as one of the roots
of the secular polynomial
is constant, ${z}_0=0$,
the localization of the EP5 boundary manifold
$\partial{\cal D}^{(2+3)}$
is facilitated. Indeed, the secular polynomial
must degenerate
to monomial, $P^{(EP5)}={z}^5$.
In the light of Eqs.~(\ref{formuf}) (\ref{petak}) this
leads to two necessary-condition
equations yielding the $b \neq 0$ ``solution A'',
 \be
 h=-a-d\,,\ \ \ \ f=-a^2/b-c\,
 \label{sola}
 \ee
and its $b=0$ complement, ``solution B'',
 \be
 a=b=0\,,\ \ \ h=-d\,.
 \label{solbe}
 \ee
In the former case ``A'', condition (\ref{sola}) is also
sufficient.
This means that
condition (\ref{solbe}) of case ``B''
merely leads to the less singular types of degeneracy.
Its analysis may be found transferred to Appendix B.

Once we return to
the spectral-degeneracy limit of type ``A'' our
fundamental matrix
$C^{(2+3)}_{(A)}$ is really easily shown to
satisfy
the EP5 analogue
of Eq.~(\ref{JBK}),
  \be
   C^{(2+3)}_{(A)}
 =Q^{}_{(A)}\,
 J^{(5)}
 (0)\, \left [{Q^{}_{(A)}}
 \right ]^{-1}\,.
   \label{JaBKa}
 \ee
As long as
such a limiting matrix
still contains four freely variable real parameters
$a,\, b,\, c$ and $d$,
also the corresponding
part of the
boundary $\partial{\cal D}^{(2+3)}$
is a four-dimensional manifold.
For its compact description
it makes sense to
abbreviate $-a^2/b-c=F(a,b,c)=F$.
Then it is easy to display the transition matrix
 $$
 Q^{}_{(A)}=\left[ \begin {array}{ccccc}
  -{F}b&0&a&0&1\\{}0&-{F}b&0&a&0
  \\{}a{F}&0&c&0&0
  \\{}0&a{F}&0&c&0
  \\{}- \left( {a}^{2}+{F}b+ad \right) {F}&0&
  -({ {2\,ca+cd+{a}^{3}/b}})
  &0&0\end {array} \right]
 \,
 $$
and to prove that it is invertible
since $\det Q^{}_{(A)} = -{F}^{5}{b}^{2}$.

\subsection{Bosonic pairs and triples coupled to
quadruples ($L=4$, $K_L=9$)}

Let the dominant-order Hamiltonian have the following
sparse, nine by nine matrix form
 \be
 \mathfrak{H}^{(2+3+4)}_0(\lambda)
 = \left[ \begin {array}{cc|ccc|cccc}
  0&1&0&0&0&0&0&0&0
  \\
  \star&0&\star&0&0&\star&0&0&0
  \\
  \hline
  0&0&0&1&0&0&0&0&0
  \\
  \star&0&\star&0&1&\star&0&0&0
  \\
  \star&\star&\star&\star&0&\star&\star&0&0
  \\
  \hline
  0&0&0&0&0&0&1&0&0
  \\
  0&0&\star&0&0&\star&0&1&0
  \\
  \star&0&\star&\star&0&\star&\star&0&1
  \\
  \star&\star&\star&\star&\star&\star&\star&\star&0
      \end {array} \right]\,.
 \label{rethethe}
 \ee
The stars mark the $\lambda-$dependent
matrix elements which should again characterize the
leading-order components of the admissible,
real-spectrum-supporting
boson-boson interaction $\lambda\,{\cal V}^{(2+3+4)}$.

After an appropriate rescaling made in the spirit
of Eq.~(\ref{vezmatko}),
these components become proportional to the
29 relevant parameters entering fundamental matrix
with the same sparse-matrix structure.
For pedagogical reasons we will still omit
all of the representatives of the subdominant
perturbations (for example, in Eq.~(\ref{dozmatko})
we would only keep the elements of the first row).
In this way our candidate for the fundamental matrix will
have the following simplified, 17-parametric form
 \be
 C^{(2+3+4)}=\left[ \begin {array}{cc|ccc|cccc}
  0&1&0&0&0&0&0&0&0
 \\
 {}a&0&b&0&0&c&0&0&0
 \\
 \hline
 {}0&0&0&1&0&0&0
&0&0
\\
{}d&0&e&0&1&f&0&0&0
\\
{}0&h&0&j&0
&0&l&0&0
\\
\hline
{}0&0&0&0&0&0&1&0&0
\\
{}o&0&m
&0&0&n&0&1&0
\\
{}0&u&0&q&0&0&s&0&1
\\
{}0
&0&0&0&x&0&0&{\it \omega}&0
\end {array} \right]
\label{nepredeal}
 \ee
This matrix is an
$(2+3+4)$ analogue of
its $(2+3)$ partitioned predecessor (\ref{predeal})
as well as of the even simpler,
unpartitioned
matrix (\ref{uzmatko}).
Thus, in leading-order approximation,
quantum evolution controlled by
Hamiltonian (\ref{rethethe}) will be unitary
provided only that
the spectrum of
fundamental
matrix (\ref{nepredeal}) proves real and non-degenerate.

Our final task is to prove the existence of the
unitarity-supporting
boson-boson interactions:

\begin{lemma}
There exists a non-empty domain of
matrix elements in (\ref{nepredeal})
yielding the real and non-degenerate spectrum
of fundamental matrix.
\label{lemadva}
\end{lemma}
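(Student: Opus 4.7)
The plan is to reduce the $17$-parameter spectral problem for $C^{(2+3+4)}$ to three decoupled sub-problems of sizes $2$, $3$, $4$, to exhibit an explicit parameter point at which all nine eigenvalues are simple and real, and then to invoke continuity to open up a full neighbourhood in $\mathbb{R}^{17}$.

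First I would set the eleven block-off-diagonal entries of (\ref{nepredeal}) to zero, i.e., take
$b=c=d=f=h=l=m=o=q=u=x=0$.
The resulting matrix is block-diagonal, its three diagonal blocks being precisely the $K=2,3,4$ specializations of the single-partition companion matrix (\ref{uzmatko}). Because each block is tridiagonal with vanishing main diagonal, its characteristic polynomial contains only even powers of $z$: the $2\times2$ block contributes $z^{2}-a$; the $3\times3$ block contributes $z\bigl(z^{2}-(e+j)\bigr)$; and the $4\times4$ block contributes the biquadratic $z^{4}-(n+s+\omega)\,z^{2}+n\omega$. Their spectra are therefore $\{\pm\sqrt{a}\}$, $\{0,\pm\sqrt{e+j}\}$ and, under the conditions $n,s,\omega>0$ with $(n+s+\omega)^{2}>4n\omega$, four distinct nonzero real numbers $\pm\mu_{1},\pm\mu_{2}$.

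Second I would write down a concrete admissible configuration. Taking $n=s=\omega=1$ gives the biquadratic $z^{4}-3z^{2}+1$ with the four real roots $\pm\sqrt{(3\pm\sqrt{5})/2}$; combining this with, e.g., $a=4$ (yielding $\pm 2$) and $e=j=9/2$ (yielding $\{0,\pm 3\}$) produces nine pairwise distinct real numbers. Any accidental coincidence between the three block spectra could be removed by an independent rescaling of one of the blocks, so the existence of such a point is easy to guarantee.

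Third I would close the argument by a standard continuity observation: a simple real eigenvalue of a real matrix depends continuously on the matrix entries and cannot leave the real axis without first colliding with a neighbour, because a complex eigenvalue must always be accompanied by its conjugate. Consequently the set of $17$-tuples of real parameters at which $C^{(2+3+4)}$ has nine simple real eigenvalues is open in $\mathbb{R}^{17}$ and contains the explicit point just constructed, which establishes Lemma~\ref{lemadva}. The only genuine book-keeping step is Step~2, in verifying the mutual disjointness of the three block spectra; I expect no deeper obstacle, because the whole strategy bypasses the degree-nine secular polynomial. Naturally the argument gives no quantitative description of $\mathcal{D}^{(2+3+4)}$ or of its boundary, in contrast to the explicit analysis performed for $(2+3)$ in Subsection~\ref{secA}, but the statement of the lemma requires only non-emptiness.
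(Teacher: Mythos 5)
Your proof is correct, but it takes a genuinely different route from the paper. The paper's own argument stays with the full degree-nine secular polynomial: it starts from the all-ones point (where the characteristic polynomial is $z^{9}-6z^{7}+3z^{5}$, real but with a five-fold root at the origin) and then breaks the degeneracy iteratively, first by shifting $b\to 1+v$ with $v=-1/10$ and then by adding $w=-1/100$ to $a$, $l$ and $u$, checking numerically at each stage that the roots separate while staying real. You instead decouple the problem by switching off the eleven block-off-diagonal entries, which reduces $C^{(2+3+4)}$ to a direct sum of the $K=2,3,4$ single-partition companion matrices; the three block characteristic polynomials $z^{2}-a$, $z\,(z^{2}-(e+j))$ and $z^{4}-(n+s+\omega)z^{2}+n\omega$ are exactly as you state, your sample point gives nine distinct real eigenvalues, and the conjugate-pair openness argument then yields a full open set in $\mathbb{R}^{17}$. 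Your version is more elementary (no degree-nine polynomial, no numerics) and generalizes at once to any partition $(N_{1},\ldots,N_{\max})$, e.g.\ via the Chebyshev choice of Lemma~\ref{lekora} for each block. One point you should make explicit: your base point has all block-coupling entries equal to zero, i.e.\ it corresponds to a perturbation that (at leading order) still conserves the boson number, which is precisely the situation Section~\ref{3p3} wants to go beyond; the lemma as stated is nevertheless established, and the physically relevant non-conserving case is recovered because your openness step supplies points with small but nonzero off-block couplings. By contrast, the paper's construction lands directly at a point where the couplings are of order one, which is closer to the ``corridor'' picture the author emphasizes, at the price of a trial-and-error, partly numerical verification.
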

\begin{proof}
The process of proof will be similar to the preceding case.
In the first step we set all parameters equal to one
an obtain the elementary secular polynomial
$
{{\it {z}}}^{9}-6\,{{\it {z}}}^{7}+3\,{{\it {z}}}^{5}
$ with the nine real roots
 $$
 \{0, 0, 0, 0, 0, -(3+6^{1/2})^{1/2},
  (3+6^{1/2})^{1/2}, -(3-6^{1/2})^{1/2}, (3-6^{1/2})^{1/2}\}\,,
  $$
i.e., numerically,
 $$
 \{
 \pm 2.334414218, \pm 0.7419637843, 0., 0., 0., 0., 0.
 \}.
 $$
The quintuple degeneracy
was weakened by a modification of $b=1+v$.
This yielded a modified secular polynomial
which was linear in $v$. The
behavior of the easily obtained curve $v=v({z})$
near the origin indicated a weakening of the degeneracy
for $v \in (-1/4,0)$, so we choose
$v=-1/10$ and obtained the amended spectrum
 $$\{
\pm 2.327224413, \pm 0.7154619694, \pm 0.2685902108, 0., 0., 0.
\}\,.$$
With the degeneracy
reduced  to three we iterated the process
and added a new auxiliary variable $w$
to elements $a$, $l$ and $u$.
Having evaluated the (this time, double-branched) function
$w({z})$. As long as his function appeared to have
two branches,
 $$
 w_1(z)=(-{\frac {1}{10}}+{\frac {319}{200}}
 {{\it {z}}}^{2}+O \left( {{\it {z}}}^{4} \right) )
 $$
and
 $$
 w_2(z)=(-{{\it {z}}}^{2}+O \left( {{\it {z}}}^{4} \right) )
 $$
forming a small circle below the real line at small ${z}$,
we concluded that the small negative value will work.
Indeed, with $w=-1/100$ we obtained
 $$\{
\pm 2.325373957, \pm 0.7112721146, \pm 0.2586335768, \pm 0.09917969302, 0.
\}$$ i.e., the sample spectrum we needed for the proof.
\end{proof}

 \noindent
We may conclude that the
introduction of the new non-vanishing matrix elements
safely removed the degeneracy while still keeping the
other eight roots almost unchanged.
Adding that the latter spectrum
yields,
after a premultiplication by factor $\sqrt{\lambda}$, the
ultimate bound-state energies.

\section{Conclusions}

The main mathematical
message delivered by our
present paper is that
there are not too many really deep conceptual
differences between the models with the conventional,
simple exceptional points
(sampled by Eq.~(\ref{JBK}) + (\ref{JBKz}))
and the generalized models with the much less usual,
degenerate exceptional points
(sampled, e.g., by Eqs.~(\ref{JB}) + (\ref{jagen})).
Most importantly, what is shared
is the correlation between
the natural physical requirement of the unitarity of the model
(i.e., of the reality of the spectrum)
and the highly artificial-looking mathematical requirements
by which
the matrix elements of the
underlying ``admissible'' perturbation matrices
must exhibit an
ordering in size
as sampled by Eq.~(\ref{dozmatko}).

Our present results may be read as a climax of recent developments
in the field.
First of all, the formal,
purely algebraic-geometry origin of the necessity of the
apparently strongly
counterintuitive hierarchy of the matrix elements
of admissible perturbation matrices
may be found described,
in pedagogical detail,
in Section Nr. IV of Ref.~\cite{corridors}. Secondly,
in the context of physics of closed quantum systems,
a deeper clarification of the apparent paradox
of irrelevance of the conventional norms of matrices
of perturbations
should be sought
in Ref.~\cite{admissible}. The conclusion is
that as long as our quantum system of interest lives, by assumption,
in the vicinity of its
loss-of-the-unitarity boundary $\partial {\cal D}$, the
enormous differences between the
influence of
separate matrix elements of the interaction just reflect the
deformations of
the geometry
of the physical Hilbert space. Indeed, this geometry becomes,
near $\partial {\cal D}$, increasingly anisotropic
(for more details see also an extensive commentary
on this topic in \cite{lotor}).

Besides these mathematical results,
our attention was attracted by
the descriptive features of
the conventional BH Hamiltonian of Eq.~(\ref{Ham1}).
From this point of view the main
message delivered by our
present paper was threefold. Firstly,
in reference to the extensive study \cite{Uwe}
of the role of the specific
boson-boson-interaction perturbations $c\,H_{int}$
we emphasized that near the EPK
extreme of boundary $\partial {\cal D}$,
such a perturbation would
be out of our present closed-system-oriented
interest because, due to its complex energies, it
only describes physical reality
in the traditional open, unstable quantum system setting.

Secondly, guided by the results
presented in paragraph \ref{3p2}
and in Appendix A we
found a new domain of applicability of the
closed quantum system philosophy
under assumption that the number of bosons
is constant.
For the purpose
we recommended the replacement of
conventional, ``too large''
boson-boson interaction term $c\,H_{int}$
by its ``sufficiently small''
boson-boson interaction
amendment as
specified by Theorem \ref{unos}.

Thirdly, we observed that
the mathematics behind the
latter amendment need not necessarily remain restricted to the
perturbations which commute with
the boson number operator $\widehat{N}$ of Eq.~(\ref{Num1}).
Subsequently we imagined that such an innocent-looking
extension of mathematics leads to an enormous extension
of the descriptive phenomenological
capacity of the model. Indeed, in the dominant-order
approximation we managed to reduce the
problem of the guarantee of the unitarity
of evolution of the system in question
(i.e., of the reality of the $\lambda-$dependent
bound-state energies) to the purely mathematical
analysis of spectra of certain auxiliary, sparse and
$\lambda-$independent ``fundamental'' matrices $C$.

The feasibility of application of the latter criterion
was finally demonstrated via the first two
nontrivial illustrative examples
in which the number of bosons was not conserved
[cf. their respective partitioned-matrix
Hamiltonians in Eqs.~(\ref{forha}) and (\ref{rethethe})]
and in which the exceptional-point singularities
acquired the extremely interesting degenerate-EP structures
of the form of the direct sums E2+E3 and E2+E3+E4, respectively.

Marginally, it seems worth adding that
the numbers of the relevant
variable parameters which were ``multi-indexing'' the
corresponding five- and nine-dimensional
matrices of  perturbations
were 8 and 27, respectively.
Still, after a number of trial and error ``experiments''
we eliminated some of the less relevant variables
and, ultimately, we managed to prove
the non-emptiness of the respective ``physical'',
unitarity-supporting
domains ${\cal D}$ of
parameters.
Moreover,
last but not least we accompanied the latter
proofs of existence also by
an explicit localization and exceptional-point
classification of a few parts of the
end-of-unitarity quantum phase transition
boundary $\partial {\cal D}^{(2+3)}$.

\section*{Acknowledgments}

The author acknowledges the financial support from the
Excellence project P\v{r}F UHK 2020.

\section*{Funding statement}

The author's work was supported by the Nuclear Physics Institute
of the CAS, and by the University of Hradec Kr\'alov\'e.

\newpage

\newpage

\section*{Appendix A: Conservative BH perturbations (\ref{ladep})
and the necessary and sufficient condition
of the reality of spectrum}

The demonstration of the
reality of the spectra as described
in paper
\cite{corridors}
is strongly model-dependent and
hardly applicable
in the present general BH-related setup.
Let us propose and describe a different,
shorter version
of the method, therefore.

Firstly, in our perturbed Schr\"{o}dinger equation (\ref{se17})
let us preselect any ground- or excited-state subscript
$n=n_0$. Then, in a shorthand notation let us
just remember
this information and simplify
$|\Psi_{n_0}(\lambda)\kt \to |\Psi(\lambda)\kt$
and $E_{n_0}(\lambda) \to E_{}(\lambda)$.
Secondly, let us introduce a large, $\lambda-$dependent
cut-off parameter
${\Lambda}={\Lambda}(\lambda)=1/\sqrt{\lambda}$
entering a one-parametric auxiliary diagonal
matrix ${G}^{(K)} ({\Lambda})$
with ``increasingly large''
elements ${G}_{kk}^{(K)} ({\Lambda})={\Lambda}^k$,
$k=0,1,\ldots,K-1$.
Next, let us re-write the perturbed
Schr\"{o}dinger equation (\ref{se17})
in a preconditioned $K$ by $K$ matrix form
 \be
 {G}^{(K)} ({\Lambda})
 \left [  J^{(K)}(0)
  + \lambda\, W^{(K)}(\lambda) - E(\lambda) I
  \right ]\,\left[{G}^{(K)}  ({\Lambda}(\lambda))\right ]^{-1}
 \,|\Phi (\lambda) \kt= 0\,,
 \ \ \ \ |\Phi (\lambda) \kt={G}^{(K)} ({\Lambda})\, |\Psi(\lambda)  \kt
 \,.
 \label{se17b}
 \ee
Finally let us re-scale the energy
$E(\lambda) =\varepsilon ({\Lambda})/{\Lambda}\,$
and transform our initial, conventional secular equation
 \be
 \det\,\left \{
  {G}^{(K)} [{\Lambda}(\lambda)]
 \left [  J^{(K)}(0)
  + {\lambda}\, W^{(K)}({\lambda}) - E({\lambda}) I^{(K)} \right ]
  \,\left[{G}^{(K)}  ({\Lambda}(\lambda))\right ]^{-1}
  \right \}
  = 0
 \label{honzatko}
 \ee
into the following equivalent equation
 \be
 \det\,
 \left [  J^{(K)}(0)
  + M^{(K)}({\Lambda}) - \varepsilon({\Lambda}) I^{(K)} \right ]
  = 0\,.
 \label{mazatko}
 \ee
The re-scaled
matrix of perturbations is defined in terms of elements
$W_{mn}=W^{(K)}_{mn}({\lambda})$ multiplied by powers of our
cut-off-resembling large parameter $\Lambda=1/\sqrt{\lambda}$,
 \be
 M^{(K)}({\Lambda})=
   \left [
   \begin{array}{ccccc}
   {\Lambda}^{-1}W_{00} &{\Lambda}^{-2}
   W_{01}&{\Lambda}^{-3}W_{02}&\ldots&{\Lambda}^{-K}W_{0,K-1}
   \\
   {\Lambda}^{0}W_{10} &{\Lambda}^{-1}W_{11}&
   {\Lambda}^{-2}W_{12}&\ldots&{\Lambda}^{1-K}W_{1,K-1}
   \\
   {\Lambda}^{1}W_{20} &{\Lambda}^{0}W_{21}&
   {\Lambda}^{-1}W_{22}&\ldots&{\Lambda}^{2-K}W_{2,K-1}\\
   \vdots &\vdots&\vdots&\ddots&\vdots\\
   {\Lambda}^{K-2}W_{K-1,0} &{\Lambda}^{K-3}W_{K-1,1}&
   {\Lambda}^{K-4}W_{K-1,2}&\ldots &{\Lambda}^{-1}W_{K-1,K-1}
   \end{array}
    \right ]\,.
 \label{tatko}
 \ee
Under the boundedness assumption (\ref{ods})
and using a leading-order-coefficient simplification
 $$
 W^{(K)}_{mn}({\Lambda})=V_{mn}\Lambda^{constant}+ {\rm corrections}\,,
 \ \ \ \ m,n=0, 1, \ldots, K-1
 $$
we get the
lower triangular leading-order matrix of perturbations
 \be
 M^{(K)}_0({\Lambda})=
 \left [
   \begin{array}{ccccc}
   0 &0&\ldots&0&0\\
   V_{10} &0&\ddots&\vdots&\vdots\\
   {\Lambda}^{}V_{20} &\ddots&\ddots&0&0\\
   \vdots &\ddots&V_{K-2,K-3}&0&0\\
   {\Lambda}^{K-2}V_{K-1,0} &\ldots& {\Lambda}^{}V_{K-1,K-3} &V_{K-1,K-2}&0
   \end{array}
    \right ]\,.
 \label{zmatko}
 \ee
The insertion
of this matrix in Eq.~(\ref{mazatko})
with
$\varepsilon({\Lambda})=\varepsilon_0+ {\rm corrections}$
yields
the leading-order secular equation
 \be
 \det\,
 \left [
   \begin{array}{ccccc}
   -\varepsilon_0 &1&0&\ldots&0\\
   V_{10} &-\varepsilon_0&1&\ddots&\vdots\\
   {\Lambda}^{}V_{20} &\ddots&\ddots&\ddots&0\\
   \vdots &\ddots&V_{K-2,K-3}&-\varepsilon_0&1\\
   {\Lambda}^{K-2}V_{K-1,0} &\ldots&
    {\Lambda}^{}V_{K-1,K-3} &V_{K-1,K-2}&-\varepsilon_0
   \end{array}
    \right ]\,
    =0\,.
 \label{xzmatko}
 \ee
After its
systematic analysis as sampled in both Refs.~\cite{admissible}
and \cite{corridors}
one comes to the conclusion that
the spectrum cannot be real unless one accepts the
assumption that
in the above-mentioned conventional Hilbert space
${\cal K}={\cal C}^K$
also  matrix (\ref{zmatko})
is kept bounded,
 \be
  M^{(K)}_0 \in {\cal B}({\cal K})\,.
  \label{odsb}
  \ee
This type of constraint was rendered possible by
the above-mentioned requirement
(\ref{ladep}) of an explicit variability of the separate
matrix elements with the cut-off or strength
of perturbation $\lambda=1/\Lambda^2$.
Thus,
besides the matrix-triangularity rule (\ref{triru}) as used also in Eq.~(\ref{zmatko}),
the reality of the perturbed spectrum
is, in general, guaranteed by
equations (\ref{dozmatko})
and (\ref{vezmatko}).
These equations represent the
two alternative versions of the necessary
condition of the unitarity of the evolution.
In the present notation this means that
the underlying fundamental matrix (\ref{uzmatko})
must have a real and discrete spectrum.
Then, with all of its $K(K-1)/2$ variable parameters
such a matrix
specifies
an admissible perturbation.
In this sense, any such a matrix
could be interpreted as a definition of
one of the eligible paths through a corridor of
unitary
unfolding of the EPK spectral degeneracy.

\section*{Appendix B. A few non-EP5
components of the boundary of stability $\partial{\cal D}^{(2+3)}$
}

Let us recall
condition (\ref{solbe}) in case ``B''
of subsection \ref{secA}. It is easily shown to
lead to factorization
  \be
   C^{(2+3)}_{(B)}
 =Q^{}_{(B)}\,
 J_{(B)}
 \, \left [Q^{}_{(B)}
 \right ]^{-1}\,
   \label{JbBKb}
 \ee
where
 $
 J_{(B)}={\cal J}^{(4+1)} (0)$.
This means that we have to deal with
the mere EP4 boundary of the physical parametric domain
${\cal D}$. Indeed,
using abbreviation $f+c=\alpha=\alpha(f,c)$
and
assumption $c \neq 0$
the related three-parametric transition matrix
with  $\det Q^{}_{(B)} = { { \alpha ^{3}}/{c}}$
looks particularly elementary,
 $$
 Q^{}_{(B)}= \left[ \begin {array}{ccccc}
 0&0&1&-{c}^{-1}&-{c}^{-1}
\\{}0&0&0&1&0\\{}\alpha&0&0&0&0
\\{}0&\alpha&0&0&0\\{}- \alpha
 d&0&f&1&1\end {array} \right]
 \,.
 $$
In the singular limit $c \to 0$ with $f \neq 0$,
remarkably enough, the transition matrix
remains regular,
 $$
 Q^{}_{(B,0)}= \left[ \begin {array}{ccccc}
  0&0&1&-{f}^{-1}&-{f}^{-1}
 \\{}0&0&0&1&0
 \\{}f&0&0&0&0
\\{}0&f&0&0&0
\\{}-fd&0&f&0&0
\end {array} \right]\,,
 \,.
 $$
with  $\det Q^{}_{(B,0)} = { { \alpha ^{3}}/{c}}$.

Obviously, the subsequent singular
limit of $f \to 0$
still deserves a separate treatment.
In Eq.~(\ref{JbBKb}) this leads,  by direct computations, to
$
 J_{(B)}=J_{(B,0,0)}={\cal J}^{(3+2)} (0)$
and
  $$
 Q^{}_{(B)}= Q^{}_{(B,0,0)}=
 \left[ \begin {array}{ccccc} 0&1&0&-1&0
 \\{}0&0&1&0&-1
 \\{}d&0&1&0&0
 \\{}0&d&0&0&0
 \\{}-{d}^{2}&0&0&0&0
 \end {array} \right]\,.
 $$
This is an invertible matrix since $\det Q^{}_{(B,0,0)} = { { d^{3}}}$.

In the singular limit of $d \to 0$
the fundamental matrix itself acquires an EP2+EP3 form.
It is probably worth adding that its conversion into
a reordered canonical form EP3+EP2
still requires a nontrivial, non-permutation parameter-free
transition matrix
  $$
 Q^{}_{(B)}= Q^{}_{(B,0,0,0)}=
 \left[
 \begin {array}{c|cc|cc}
 0&1&0&-1&0
 \\
 {}0&0&1&0&-1
 \\
 \hline
 {}1&0&0&0&0
 \\
 \hline
 {}0&1&0&0&0
 \\
 {}0&0&1&0&0
 \end {array} \right]\,
 $$
with unit determinant.

\end{document}